\algnewcommand{\Initialize}[1]{%
  \State \textbf{Initialize:}
  \Statex \hspace*{\algorithmicindent}\parbox[t]{.8\linewidth}{\raggedright #1}
}
\algnewcommand{\Output}[1]{%
  \State \textbf{Output:}
  \Statex \hspace*{\algorithmicindent}\parbox[t]{.8\linewidth}{\raggedright #1}
}
\newtheorem{assumption}{Assumption}
\newtheorem{theorem}{Theorem}
\newtheorem{lemma}{Lemma}
\title{\LARGE \bf Dissipative Avoidance Feedback for Reactive Navigation Under Second-Order Dynamics}
\author{Lyes Smaili$^{1}$ , Zhiqi Tang$^{2}$, Soulaimane Berkane$^{1,3}$, and Tarek Hamel$^{4}$
\thanks{This research work is supported in part by NSERC-DG RGPIN-2020-04759 and Fonds de recherche du Qu\'ebec (FRQ).}
\thanks{$^{1}$Department of Computer Science and Engineering, University of Quebec in Outaouais, Gatineau, QC, Canada. {\tt\small smal01@uqo.ca}, {\tt\small Soulaimane.Berkane@uqo.ca}}
\thanks{$^{2}$Division of Decision and Control Systems, School of Electrical Engineering and Computer Science, KTH Royal Institute of Technology, SE-10044 Stockholm, Sweden. {\tt\small ztang2@kth.se}}
\thanks{$^{3}$Department of Electrical Engineering, Lakehead University, ON, Canada.}
\thanks{$^{4}$I3S-UCA, CNRS, Universite C\^ote d’Azur \& Institut Universitaire de France (IUF), France. {\tt\small thamel@i3s.unice.fr}}
}
\date{February 2024}
\begin{document}

\maketitle
\thispagestyle{empty}
\pagestyle{empty}

\begin{abstract}
 This paper addresses the problem of autonomous robot navigation in unknown, obstacle-filled environments with second-order dynamics by proposing a Dissipative Avoidance Feedback (DAF). Compared to the Artificial Potential Field (APF), which primarily uses repulsive forces based on position, DAF employs a dissipative feedback mechanism that accounts for both position and velocity, contributing to smoother and more natural obstacle avoidance. The proposed continuously differentiable controller solves the motion-to-goal problem while guaranteeing collision-free navigation by using the robot's state and local obstacle distance information. We show that the controller guarantees safe navigation in generic 
$n$-dimensional environments and that all undesired $\omega$-limit points are unstable under certain \textit{controlled} curvature conditions. Designed for real-time implementation, DAF requires only locally measured data from limited-range sensors (e.g., LiDAR, depth cameras), making it particularly effective for robots navigating unknown workspaces. Simulations in 2D and 3D environments are conducted to validate the theoretical results and to showcase the effectiveness of our approach.
\end{abstract}

\section{Introduction}
Autonomous navigation with obstacle avoidance is a significant challenge in robotics, particularly in dynamic and unknown environments. Various techniques have emerged to address this problem, including the widely recognized artificial potential fields (APF) method \cite{khabib1986}. While APF is celebrated for its simplicity and efficiency compared to sampling-based methods like A* or Dijkstra, it is prone to local minima \cite{Koditschek1987}. The navigation function approach \cite{KODITSCHEK1990} offers a global solution by leveraging prior environmental knowledge and careful gain tuning to ensure almost global asymptotic stability (AGAS) in Euclidean spaces \cite{KoditschekRimon1992}. This method can be extended to complex environments using diffeomorphic mappings \cite{KoditschekRimon1992}, \cite{LiCailiTanner2019}, \cite{Paternain2017}, enhancing flexibility for non-trivial topologies. Other global methods, like the navigation transform \cite{Loizou2017} and prescribed performance control \cite{Vrohidis2018}, simplify parameter tuning while achieving safe navigation. Hybrid feedback approaches have also been developed to guarantee global asymptotic stability \cite{sanfelice2006robust, berkane2019hybrid, casau2019hybrid, berkane2021obstacle}, with recent strategies ensuring obstacle avoidance along the shortest path \cite{cheniouni2023safe}.

Navigation function-based methods have been adapted for unknown environments to facilitate real-time applications \cite{Lionis2007, Filippidis2011}. Reactive methods relying on local information, such as the sensor-based approach in \cite{arslan2019sensor}, achieve AGAS in environments with convex obstacles. Safety Velocity Cones (SVCs) have been proposed in \cite{berkane2021Navigation,smaili2024} as a simple and lightweight reactive feedback-based approach while more sophisticated hybrid feedback control methods in \cite{sawant2023hybrid} enable safe and global navigation in unknown environments.
Most of these methods, except for \cite{khabib1986}, target single-integrator systems, raising safety concerns when applied to robots with complex dynamics. An extension to higher-order models via reference governors is proposed in \cite{Arslan2017extension}, though it limits performance in less aggressive scenarios. Navigation functions for second-order systems are explored in \cite{Loizou2011}, ensuring safety but only guaranteeing stability in two-dimensional spaces. Additionally, distributed receding horizon planning for double-integrator systems \cite{Mendes2017uncertain} faces complexity challenges, and adaptive control methods \cite{VERGINIS2021adaptive} may be challenging to implement in unpredictable environments. Overall, achieving reliable and reactive obstacle avoidance under higher-order dynamics remains a challenge.

The objective of this work is to develop a generic control approach for navigation of autonomous robots under second-order dynamics in an $n$-dimensional environment filled with obstacles. The proposed controller combines a classical PD term with a dissipative term analogous to divergence flow, inspired by the behavior of flying insects, to decelerate the robot’s motion when approaching obstacles \cite{M.V.Srinivasan1996}. Note that the use of a dissipative term has been employed to address collision avoidance for multi-agent systems in the absence of obstacles \cite{Tang2023}, as well as  VTOL-UAV landing in obstacle-rich environments \cite{ROSA2014}. In contrast to these works, our approach addresses the motion-to-goal problem in the presence of multiple static obstacles, using only local information, and provides formal guarantees of collision avoidance and convergence under second-order dynamics. The proposed controller utilizes two simple pieces of information to avoid obstacles: the distance to the closest obstacle and the bearing to the obstacle. This makes the implementation of the controller inexpensive and versatile, enabling its application across various settings, including robots equipped with vision systems or LiDAR. We demonstrate that, under certain regularity conditions on the obstacles, such as smoothness and curvature constraints, the target equilibrium is asymptotically stable, while the undesired $\omega$-limit points are unstable. Overall, the proposed approach seeks to effectively address the challenges of reliable reactive navigation in unknown environments under second-order dynamics.

This paper is organized as follows: Section \ref{section:Notation} provides the notation used and different definitions used throughout this work. Section \ref{section:Problem Formualtion} describes the problem of interest, defines the workspace, and provides some assumptions on its topology. Section \ref{section:FeedbackControl} is devoted to the control design and stability analysis. Section \ref{section:NumericalSimulation} provides numerical simulations to show the effectiveness of the proposed obstacle avoidance approach in 2D and 3D environments. Finally, Section \ref{section:Conclusion} provides some concluding remarks.

\section{Notation and Preliminaries}\label{section:Notation}

Let $\mathbb{R}$, $\mathbb R_{>0}$, and $\mathbb N$ denote the set of reals, positive reals, and natural numbers, respectively. $\mathbb R^n$ represents the $n$-dimensional Euclidean space, with $n\in\mathbb N$. The Euclidean norm of a vector $p\in\mathbb R^n$ is represented as $\|p\|$. For a subset $\mathcal{A}\subset\mathbb R^n$, $\textbf{int}(\mathcal{A})$, $\partial\mathcal{A}$, $\overline{\mathcal{A}}$, and $\mathcal{A}^{\complement}$ denote its topological interior, boundary, closure, and complement in $\mathbb R^n$, respectively. Let $f(x):\mathbb{R}^n\to\mathbb{R}$ be a scalar-valued function and $F(x):\mathbb{R}^n\to\mathbb{R}^m$ be a vector field. $\nabla f(x)$ and $\nabla^2f(x)$ denote the gradient and Hessian of $f$ with respect to $x$, respectively. $\nabla F(x)$ represents the Jacobian of $F$ with respect to $x$. The Euclidean ball of radius $r>0$ centered at $p$ is defined as $\mathcal{B}(p,r):=\{y\in\mathbb R^n: \|p-y\|<r\}$.
Define the distance from a point $p\in\mathbb R^n$ to a closed set $\mathcal{A}\subset\mathbb R^n$ as
$d^0_\mathcal{A}(p):=\inf _{y\in\mathcal{A}}\|y - p\|$.
For two sets $\mathcal{A},\mathcal{B}\subset\mathbb{R}^n$, the distance between them is given by
$
    d^0_{\mathcal{A},\mathcal{B}}:= \inf _{p\in\mathcal{A},y\in\mathcal{B}}\|y - p\|.
$
 The projection of a point $p\in\mathbb{R}^n$ onto a set $\mathcal{A}\subset\mathbb R^n$ is denoted by 
 \begin{equation}
     \textbf{P}_\mathcal{A}(p):=\{y\in\overline{\mathcal{A}}: \|y-p\|=d^0_\mathcal{A}(p)\}.
 \end{equation}

We define the oriented distance function as $d_\mathcal{A}(p):= d^0_\mathcal{A}(p)-d^0_{\mathcal{A}^{\complement}}(p)$, see \cite[Definition 2.1, chap 7]{delfour2011shapes}. For the sake of simplicity, we will call $d_\mathcal{A}$ by the 'distance function' in the remaining text without referring to $d^0_\mathcal{A}$. The unit normal vector function, for all $p\in \mathbb R^n\setminus\{\partial\mathcal{A}\}$ such that $\textbf{P}_{\partial\mathcal{A}}(p)$ is unique, is given by \cite[Theorem 3.1, chap. 7]{delfour2011shapes}

\begin{equation}
    \nabla d_{\complement\mathcal{A}}(p):=\left\{
    \begin{array}{cc}
      -\frac{p-\textbf{P}_{\partial\mathcal{A}}(p)}{\|p-\textbf{P}_{\partial\mathcal{A}}(p)\|},   &  p\in\textbf{int}(\mathcal{A}^{\complement}),\\
      
      \frac{p-\textbf{P}_{\partial\mathcal{A}}(p)}{\|p-\textbf{P}_{\partial\mathcal{A}}(p)\|},   &  p\in\textbf{int}(\mathcal{A}).   
    \end{array}
    \right.
\end{equation}
The skeleton of $\mathcal{A}$ is defined as
\begin{equation}
    \mathbf{Sk}(\mathcal{A}):=\{p\in\mathbb R^n:\mathbf{card}(\mathbf{P}_{\overline{\mathcal{A}}}(p))>1\}.
\end{equation}
which is a set of points on $\mathbb R^n$ whose projection onto $\overline{\mathcal{A}}$ is not unique.
For a non-empty set $\mathcal{A}$, the reach of $\mathcal{A}$ at $p\in\overline{\mathcal{A}}$ is defined as
    \begin{align}
        &\mathbf{reach}(\mathcal{A},p):=\nonumber\\
        &\left\{\begin{array}{ll}
          0,   &  p\in\partial\overline{\mathcal{A}}\cap\overline{\mathbf{Sk}(\mathcal{A})},\\
          \sup\{r>0:\mathbf{Sk}(\mathcal{A})\cap\mathcal{B}(p,r)=\emptyset \},   & \text{otherwise.}
        \end{array}\right.
    \end{align}
    The reach of the set $\mathcal{A}$ is given by \cite[Def. 6.1, chap. 6]{delfour2011shapes}
    \begin{equation}
        \mathbf{reach}(\mathcal{A}):=\inf_{p\in\mathcal{A}}\{\mathbf{reach}(\mathcal{A},p)\}.
    \end{equation}
    The set $\mathcal{A}$ has {\it positive reach} if $\mathbf{reach}(\mathcal{A})>0$.  
We refer to \cite[Definition 3.1, Chap 2]{delfour2011shapes} for the definition of sets of class $\mathcal{C}^{k,l}$.
The relation between the class of the set $\mathcal{A}$ and the class of the distance function $d_\mathcal{A}$ is given in the following lemma; extracted from \cite[Theorem 8.2, Chap. 7] {delfour2011shapes}.
\begin{lemma}\label{lemma: classDistancefunction}
    For $k\ge2$ and $0\le l\le1$, if $\mathcal{A}$ is a set of class $\mathcal{C}^{k,l}$, then
    \begin{equation*}
        \forall p\in\partial\mathcal{A}, \exists\rho>0 \text{ such that } d_\mathcal{A}\in\mathcal{C}^{k,l}(\overline{\mathcal{B}(p,\rho)}).
    \end{equation*}
\end{lemma}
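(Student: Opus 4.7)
The plan is to exploit the local graph representation of $\partial\mathcal{A}$ supplied by the $\mathcal{C}^{k,l}$ hypothesis and transport that regularity to $\mathbf{d}_\mathcal{A}$ through a tubular neighborhood construction, picking up one extra degree of smoothness at the end from the gradient structure of the signed distance function.

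First, I would localize: fix $p\in\partial\mathcal{A}$ and use the definition of class $\mathcal{C}^{k,l}$ to obtain a neighborhood $U$ of $p$, a rigid-motion frame, and a function $g:\mathbb{R}^{n-1}\to\mathbb{R}$ of class $\mathcal{C}^{k,l}$ such that $\partial\mathcal{A}\cap U$ coincides with the graph of $g$ and $\mathcal{A}\cap U$ with its subgraph. Differentiating $g$ once expresses the outward unit normal field $\nu$ on $\partial\mathcal{A}\cap U$ as a rational function of $\nabla g$ and $\|\nabla g\|$, so that $\nu$ inherits class $\mathcal{C}^{k-1,l}$.

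Second, I would introduce the normal map $\Phi(y,t):=y+t\,\nu(y)$ for $y\in\partial\mathcal{A}\cap U$ and $|t|<\rho_0$. At $t=0$ its Jacobian is the identity plus a tangential correction built from $\nabla\nu$, which is finite because $\nu\in\mathcal{C}^{k-1,l}$ with $k\geq 2$; hence $\Phi$ is a local diffeomorphism near $(p,0)$. An inverse function theorem in the $\mathcal{C}^{k-1,l}$ category furnishes some $\rho>0$ and a $\mathcal{C}^{k-1,l}$ inverse $\Phi^{-1}:\mathcal{B}(p,\rho)\to(\partial\mathcal{A}\cap U)\times(-\rho,\rho)$. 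Inside this tubular neighborhood the projection $\mathbf{P}_{\partial\mathcal{A}}$ is single-valued and equals the first component of $\Phi^{-1}$, while $\mathbf{d}_\mathcal{A}$ agrees, up to sign, with the second component; at this point one only has $\mathbf{d}_\mathcal{A}\in\mathcal{C}^{k-1,l}$.

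Third comes the gain-of-one-derivative observation: throughout $\mathcal{B}(p,\rho)$ the identity $\nabla\mathbf{d}_\mathcal{A}(q)=\pm\,\nu(\mathbf{P}_{\partial\mathcal{A}}(q))$ holds, with the sign depending on whether $q$ lies inside or outside $\mathcal{A}$. The right-hand side is a composition of two $\mathcal{C}^{k-1,l}$ maps, and a function whose gradient is of class $\mathcal{C}^{k-1,l}$ on an open set is itself of class $\mathcal{C}^{k,l}$ there. Shrinking $\rho$ so that $\overline{\mathcal{B}(p,\rho)}$ sits inside the tubular neighborhood, we obtain $\mathbf{d}_\mathcal{A}\in\mathcal{C}^{k,l}(\overline{\mathcal{B}(p,\rho)})$, which is the desired conclusion.

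The main obstacle is precisely this last step: the tubular neighborhood map by itself would only yield $\mathcal{C}^{k-1,l}$ regularity, losing a derivative relative to the claim. Closing the gap hinges on the gradient identity for $\mathbf{d}_\mathcal{A}$, whose rigorous justification requires showing that the unique minimizer defining $\mathbf{P}_{\partial\mathcal{A}}(q)$ is reached along a normal ray to $\partial\mathcal{A}$. This follows from the first-order optimality condition for the distance-minimization problem, combined with the uniqueness of projections supplied by the tubular neighborhood, but it is the delicate ingredient that links the geometric and analytic parts of the argument.
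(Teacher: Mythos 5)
The paper does not prove this lemma at all --- it is quoted verbatim from \cite[Theorem~8.2, Chap.~7]{delfour2011shapes} --- and your argument is essentially the standard proof given in that reference: local $\mathcal{C}^{k,l}$ graph representation, normal (tubular neighborhood) map with a $\mathcal{C}^{k-1,l}$ inverse, and the bootstrap $\nabla\mathbf{d}_\mathcal{A}=\nu\circ\mathbf{P}_{\partial\mathcal{A}}$ to recover the $k$-th derivative, with the need for $k\ge 2$ correctly located in the invertibility of the normal map and the composition of Hölder maps. Your sketch is sound; the only points stated rather than fully justified (that the projection of a nearby point lands in the local patch and is attained along a normal ray, and the eikonal identity underlying the gradient formula) are exactly the ones you flag, and they are handled in the cited reference.
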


\section{Problem Formulation} \label{section:Problem Formualtion}

We consider a ball-shaped robot, with radius $R\in\mathbb R_{>0}$, and centered at $p\in\mathbb R^n$. Let $\mathcal{W}$ be a closed subset of the $n$-dimensional Euclidean space $\mathbb R^n$, which bounds the workplace. We consider $m$ smaller and compact subsets $\mathcal{O}_i\subset\mathcal{W}$, $i\in\{1,\cdots,m\}$, that represents the obstacles. We denote by $\mathcal{X}$ the free space given by
\begin{equation}
    \mathcal{X}:=\mathcal{W}\setminus\bigcup_{i=1}^{m}\textbf{int}(\mathcal{O}_i).
\end{equation}
The obstacle region is given by the complement of the free space $\mathcal{X}^{\complement}$. We define the outer obstacle $\mathcal{O}_0$ that surrounds the workspace $\mathcal{W}$ by
\begin{equation}
    \mathcal{O}_0:=\mathbb R^n\setminus\textbf{int}(\mathcal{W}).
\end{equation}
In many classical approaches to obstacle avoidance, the obstacles are usually \textit{parameterized by functions (anaytical expressions)}, where geometries are defined through a specific function or scalar variable, such as quadratic functions for ellipsoids, see \textit{e.g.,} \cite{Paternain2017}. However, as the number and complexity of the obstacles increase, the requirement to explicitly characterize each obstacle by a function further complicates the controller's design, especially if a sensor-based design is required. In contrast, the proposed approach leverages a distance function \textit{parameterized by the free space} itself. This alternative allows for treating arbitrary, potentially complex geometries, provided they are measurable. By using distance functions, we avoid the need to explicitly describe obstacles with parametric functions, simplifying the representation of free space. This not only reduces the computational burden but also offers more flexibility in handling environments with numerous and unknown shaped obstacles, making it a more scalable and adaptable solution for navigation tasks; this approach has been used \textit{e.g.,} \cite{khabib1986,berkane2021Navigation,smaili2024}.

Consider the distance function $d_{\mathcal{X}^{\complement}}$ to the obstacle set $\mathcal{X}^{\complement}$. The following is a feasibility constraint which ensures that the problem is formulated within defined limits:
\begin{assumption}[Positive Reach Set]\label{assumption:positive reach}
    Given the free space $\mathcal{X}$, there exist a positive real $h>0$ such that any point $p\in\mathcal{X}$, with $d_{\complement\mathcal{X}}(p)<h$, has a unique projection $\mathbf{P}_{\partial\mathcal{X}}(p)$.
\end{assumption}
Assumption \ref{assumption:positive reach} implies that, according to \cite[Theorem
6.3, Chap. 6]{delfour2011shapes}, the set $\mathcal{X}$ has a positive reach, {\it i.e.},
\begin{equation}
    \mathbf{reach}(\mathcal{X})>0.
\end{equation}


Moreover, the following assumption introduces a regularity condition on  $\mathcal{X}$, which is required in the proposed design.

\begin{assumption}[Smooth sets]\label{assumption:smoothBoundaries}
    Given the free space $\mathcal{X}$, there exist a positive real $\rho>0$ such that the functions $d_{\mathcal{X}^{\complement}}$, $\nabla d_{\mathcal{X}^{\complement}}$ and $\nabla^2 d_{\mathcal{X}^{\complement}}$ are all continuous for $p\in\mathbb R^n$ with $d_{\mathcal{X}^{\complement}}(p)<\rho$. 
\end{assumption}
According to Lemma \ref{lemma: classDistancefunction}, this assumption implies that the free space $\mathcal{X}$ is a set of class $\mathcal{C}^{2}$ with a bounded Hessian $\nabla^2d_{\mathcal{X}^{\complement}}$.

From here onward, and for the sake of simplicity, we define
\begin{align}
&d(p):=d_{\mathcal{X}^{\complement}}(p)-(R+\epsilon),\\
&\eta(p):=\nabla d_{\mathcal{X}^{\complement}}(p),\\
&\mathbf{H}(p):=\nabla^2 d_{\mathcal{X}^{\complement}}(p).
\end{align}

Furthermore, since the robot is of radius $R$, and for feasibility of the problem, the uniqueness of projection (guaranteed by Assumption \ref{assumption:positive reach}) and the local regularity of the distance function and its derivative (Assumption \ref{assumption:smoothBoundaries}) should be satisfied at least a distance $R$ away from the boundary of the obstacles. This is the purpose of the following assumption. 
\begin{assumption}[Feasibility]\label{assumption:feasibility}
    Let $h, \rho$ be positive reals that satisfy Assumptions \ref{assumption:positive reach} and \ref{assumption:smoothBoundaries} for the given free space $\mathcal{X}$. If $R$ is the radius of the robot,  then, $0<R<\min(h,\rho)$.
\end{assumption}

Let $\epsilon\in\mathbb R_{>0}$ be a design parameter representing a safety distance from the obstacle region. We define the \textit{practical free space} $\mathcal{X}_\epsilon$ as follows:

\begin{equation}\label{eq:practicalfreespace}
    \mathcal{X}_{\epsilon}:=\{p\in\mathbb{R}^n: d_{\mathcal{X}^{\complement}}(p)\ge R+\epsilon\}\subset\mathcal{X}.
\end{equation}

The practical set $\mathcal{X}_\epsilon$ is constructed by eroding the set $\mathcal{X}$ by a layer of size $R+\epsilon$. We impose the following condition on the safety margin $\epsilon$ to ensure feasibility, existence and uniqueness of $\eta(p)$ and $\mathbf{H}(p)$, while guaranteeing that the practical free space $\mathcal{X}_\epsilon$ is connected:
\begin{align}\label{condition:1}
    0<\epsilon<\min(h,\rho)-R.
\end{align}
Note that a small enough $\epsilon$ can be always found thanks to Assumption \ref{assumption:feasibility}.

Now, we consider the second-order robot dynamics
\begin{equation}\label{eq: SecondOrderSystem}
    \begin{cases}
    \dot{p}&=v,    \\
     \dot{v}&=u, 
    \end{cases} 
\end{equation}
where $p\in\mathbb R^n$ and $v\in\mathbb R^n$ are the position and the velocity of the robot, respectively, and $u\in\mathbb R^n$ is the control input (acceleration). 

The objective is to design a feedback control law $u=\kappa_\mathcal{X}(p,v):\textbf{int}(\mathcal{X}_\epsilon)\times\mathbb R^n\to\mathbb R^n$  depending only on the state $(p,v)$ measures (or estimates) and local information about the obstacles such that all trajectories of the closed-loop system,  starting in $\textbf{int}(\mathcal{X}_\epsilon)\times\mathbb R^n$, are safe (remains in $\mathcal{X}_\epsilon$), as shown in Figure \ref{fig:ObstacleAvoidance}. Moreover, we require asymptotic stability of the desired target $(p_d,0)$ and instability of any potential undesired equilibria. 
\begin{figure}[t]
    \centering
    \includegraphics[width=0.8\columnwidth]{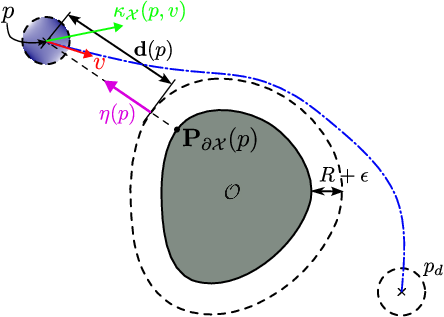}
    \caption{Illustration of the different variables and parameters used in the proposed control approach \eqref{eq:controller}.} 
    \label{fig:ObstacleAvoidance}
\end{figure}

\section{Dissipative Avoidance Feedback Control}\label{section:FeedbackControl}
In this section, we design a feedback control law that ensures safety and stability for a robot that navigates in environments with obstacles satisfying a given curvature condition which is presented later in this section. The proposed control law is derived from the Euler-Lagrange equation by incorporating the obstacle avoidance term in the dissipation function, compared to the artificial potential fields approach \cite{khabib1986} where the avoidance term is included in the potential energy, see Section \ref{section:NumericalSimulation}. 

First, we define the system's Lagrangian as 
$L(p,v):=T(v)-U(p)$, where $T(v):=v^\top v/2$ is the kinetic energy of the (unit mass) system and $U(p)=k_1\|p-p_d\|^2/2$ is the potential energy of the system, with $k_1$ a positive gain. We impose the following Rayleigh dissipation function 
\begin{equation}\label{eq:Ray}
    D(p,v):=D_{s}(v)+D_{a}(p,v),
\end{equation}
where the terms $D_{s}(v)$ and $D_{a}(p,v)$ correspond, respectively, to the stabilization and avoidance dissipation potentials, defined as follows
\begin{align}
D_{s}(v)&:=\frac{k_2}{2}v^\top v,\\
D_{a}(p,v)&:=\frac{k_3}{2}\gamma(d(p))v^\top \eta(p)\eta(p)^\top v \label{eq:Da},
\end{align}
for some positive gains $k_2$ and $k_3$ and some positive continuously differentiable scalar function $\gamma(\cdot):\mathbb R_{>0}\to\mathbb R_{>0}$ with $\gamma(z)\to+\infty$ when $z\to 0$.
The avoidance term $D_a(p,v)$ acts as a dissipative potential in the direction of the obstacle. Its effect vanishes when the velocity $v$ is orthogonal to the normal $\eta(p)$, ensuring that the robot's nominal motion is not altered. However, it blows up when $d(p)$ approaches zero, forcing the robot to decelerate when it gets closer to the obstacle (dissipation effect).


Plugging these energies into the Euler-Lagrange's equation:
\begin{equation}\label{eq:LagrangeEquation}
    \frac{d}{dt}\Big(\frac{\partial L}{\partial v}\Big)-\frac{\partial L}{\partial p}+\frac{\partial D}{\partial v}=0,
\end{equation}
results in the following second-order dynamics
\begin{equation}\label{eq:mvdot}
    \dot{v}=-k_1(p-p_d)-k_2v-k_3\gamma(d(p))\eta(p)\eta(p)^\top v.
\end{equation}
There are many possible choices for the function $\gamma(\cdot)$. In this work, the function $\gamma(.)$ is chosen as follows:
\begin{equation}\label{eq:smoothingFunction}
    \gamma(z):=\left\{\begin{array}{ll}
        0, & z\in(\epsilon_2,+\infty),\\
        \phi(z), & z\in[\epsilon_1,\epsilon_2),\\
        z^{-1}, &z\in(0,\epsilon_1),
    \end{array}
    \right.
\end{equation}

where $\epsilon_1, \epsilon_2$ are design parameters chosen such that
\begin{equation}\label{condition:3}
   {\color{blue} 0<\epsilon_1<\epsilon_2<\min(h,\rho)-(R+\epsilon),}
\end{equation}

and $\phi$ is a smooth enough function chosen such that $\gamma$ is a continuously differentiable function for all $z\in(0,+\infty)$\footnote{An example of the function $\phi$ is the cubic polynomial satisfying the conditions $\phi(\epsilon_1)=1/\epsilon_1$, $\phi'(\epsilon_1)=-1/\epsilon_1^2$, $\phi(\epsilon_2)=0$ and $\phi'(\epsilon_2)=0$, and given by
$
    \phi(z)=a+b(z-\epsilon_1)+c(z-\epsilon_1)^2+d(z-\epsilon_1)^3,
$
where
$   a=\epsilon_1^{-1}, b=-a^2, c=2a^2e-3ae^2, d=2ae^3-a^2e^2,e=(\epsilon_2-\epsilon_1)^{-1}.
$}.
To sum up, the proposed dissipative avoidance feedback (DAF) controller for \eqref{eq: SecondOrderSystem} is written as :
\begin{multline}\label{eq:controller}
    u=-k_1(p-p_d)-k_2v-k_3\gamma(d(p))\eta(p)\eta(p)^\top v.
\end{multline}
Note that the proposed controller is simple and can be implemented efficiently, as it only requires measurements of the minimum distance from the obstacles and the corresponding direction. The following theorem summarizes the contribution of this paper related to the safety and stability of the proposed controller.
\begin{theorem}\label{theorem:theorem1}
   Let $\epsilon$ be a design parameter associated to the practical free space $\mathcal{X}_\epsilon$ defined by \eqref{eq:practicalfreespace} and satisfies condition \eqref{condition:1}. Let $\epsilon_1$ and $\epsilon_2$ be the controller design parameters that satisfy condition \eqref{condition:3}. Consider the dynamics given by \eqref{eq: SecondOrderSystem} along with the controller $u$ given by (\ref{eq:controller}). Then, for any initial condition $(p(0),v(0))\in\textbf{int}(\mathcal{X}\epsilon)\times\mathbb R^n$ such that $D_{a}(p(0),v(0))$ in \eqref{eq:Da} is bounded, the following statements hold: \begin{enumerate}
        \item The set $\mathbf{int}(\mathcal{X}_\epsilon)\times\mathbb R^n$ is forward invariant.
        \item The robot state $(p,v)$ converges asymptotically to the desired equilibrium $(p_d,0)$ or reaches a set of isolated $\omega$-limit points $(p^*,0)$, where $p^*\in\mathcal{E}$, with: 
        \begin{equation}\label{eq:undiseredEquilibriumSet}
            \mathcal{E}:=\{p\in\partial\mathcal{X}_\epsilon:(p-p_d)=\mu\eta(p),\mu>0\}.
        \end{equation}
        \item For any $p^*\in\mathcal{E}$, if the Hessian $\mathbf{H}(p^*)$ admits an eigenvalue $\lambda_{\mathbf{H}}(p^*)$ such that 
        \begin{equation}\label{eq:conditionCompactForm}
            \lambda_{\mathbf{H}}(p^*)>\min\Big(1,\frac{k_2}{k_1}\Big)\frac{1}{\|p^*-p_d\|},
        \end{equation}
        then the $\omega$-limit point $(p^*,0)$ is unstable.
    \end{enumerate}
\end{theorem}
\begin{proof}
    See Appendix \ref{Appendix:C}.
\end{proof}
\begin{figure}[t]
    \centering
    \includegraphics[width=0.8\columnwidth]{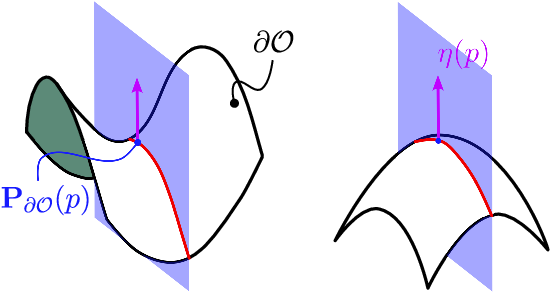}
    \caption{In 3D, the curvature condition \eqref{eq:conditionCompactForm} of Theorem \ref{theorem:theorem1} can be satisfied for convex or saddle-shaped obstacles provided that one principal direction satisfies the curvature condition \eqref{eq:conditionCompactForm} (\textit{e.g.,} strongly convex). } 
    \label{fig:CurvatureIn3D}
\end{figure}
\begin{figure}[t]
    \centering
    \includegraphics[width=0.85\columnwidth]{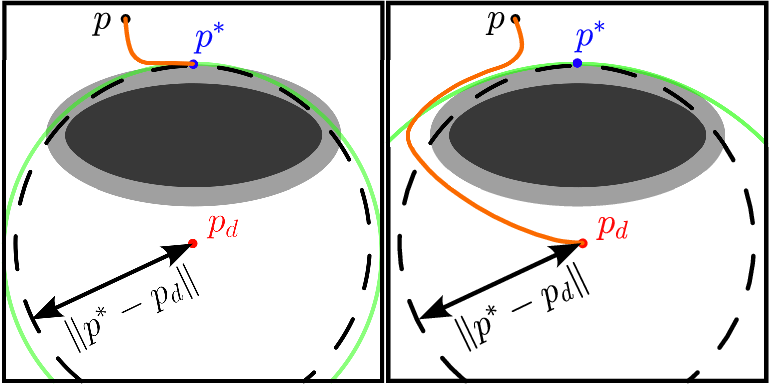}
    \caption{
    On the left, $k_2/k_1=0.93$ violates the curvature condition \eqref{eq:conditionCompactForm} making $p^*$ a local minimum. On the right, $k_2/k_1=0.68$ satisfies \eqref{eq:conditionCompactForm}, making $p^*$ a saddle point.}
    \label{fig:CurvatureVsGain}
\end{figure}
Theorem \ref{theorem:theorem1} states that safety is guaranteed under the proposed DAF controller \eqref{eq:controller}. Moreover, trajectories of the closed-loop system either converge to the desired equilibrium or one of the undesired $\omega$-limit points. The undesired $\omega$-limit points are shown to be unstable under the \textit{controlled} curvature condition \eqref{eq:conditionCompactForm}. This condition requires that one of the eigenvalues of the Hessian matrix $\mathbf{H}(p^*)$ be greater than some positive constant. Note that the Hessian matrix has at least one zero eigenvalue (since $\mathbf H(p^*)\eta(p^*)=0$) and the remaining eigenvalues correspond to the principal curvatures \footnote{The principal curvatures of an orientable hypersurface are the maximum and minimum values of the normal curvature (\cite[Definitions 2.2-2.3, Chap 5]{ShapeOperator}) as expressed by the eigenvalues of the shape operator at that point (\textit{i.e.,} the non-zero eigenvalues of $\mathbf{H}(p^*)$).} of $\partial\mathcal{O}_*$ at $p^*$. In 2D, and for $k_2/k_1>1$, condition \eqref{eq:conditionCompactForm} is equivalent to the {\it strong convexity} assumption used in the literature, \textit{e.g.,} \cite{arslan2019sensor,smaili2024}. The condition is controllable since the right-hand side of \eqref{eq:conditionCompactForm} can be made arbitrarily small by choosing $k_2/k_1$ small enough (see Figure \ref{fig:CurvatureVsGain}), although this might result in under-damped convergence. In 3D, only one principal direction for $\partial\mathcal{O}_*$ should have a positive principal curvature. This allows for avoiding saddle-shaped non-convex obstacles, as illustrated in Figure \ref{fig:CurvatureIn3D}. It is worth mentioning that the condition is always satisfied for spherical obstacles. It can also be regarded as a condition on the target position since there is always $p_d$ satisfying the condition for a fixed obstacle curvature.

\begin{figure*}
\centering
    \includegraphics[trim={3.1cm 0.1cm 2.8cm 0.8cm},clip,width=\textwidth]{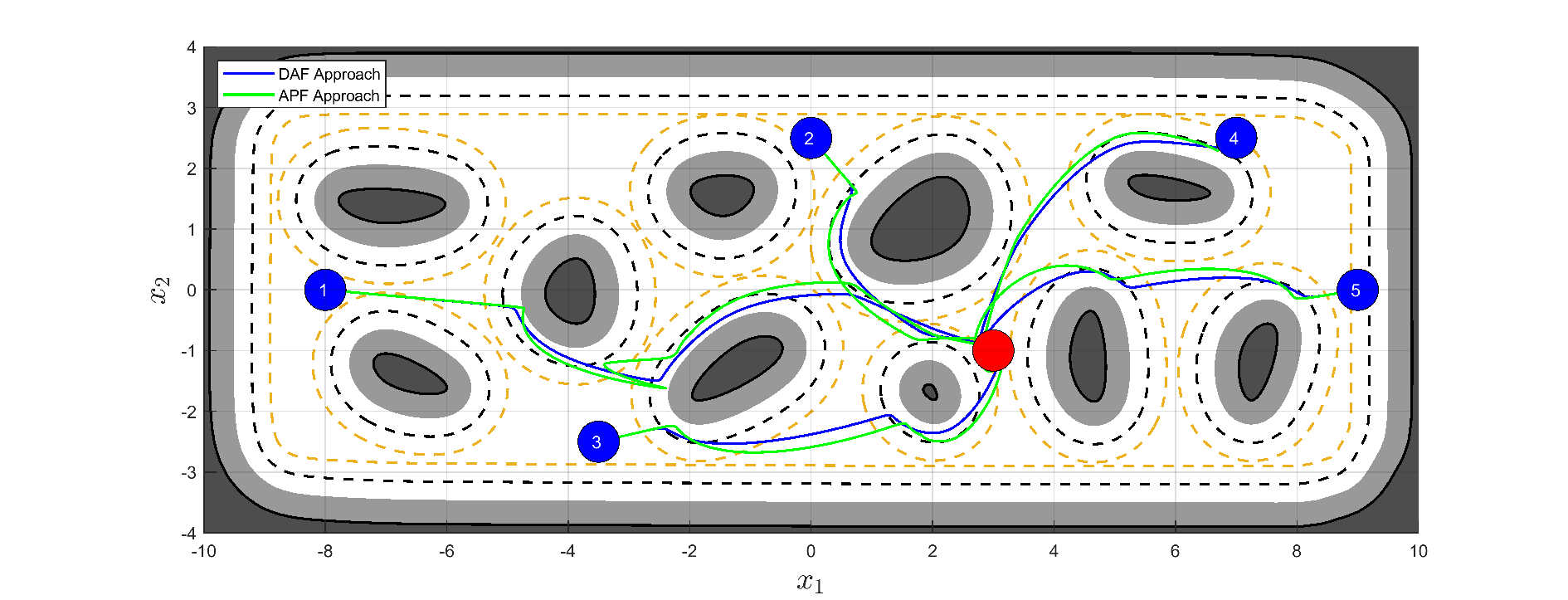}
    \caption{
    The trajectories of the robot in a 2D environment start from a set of initial positions (blue) away from the goal (red) while avoiding the obstacles (dark gray). The (light gray) regions are the dilation of the obstacles (dark gray) by the parameter $\epsilon$. The (black) and the (orange) dashed lines are a dilation by the parameters $\epsilon_1$ and $\epsilon_2$, respectively. The (blue) lines represent the trajectories under our approach, while the (green) lines represent the trajectories resulting from the APF approach. The animation is given at \href{https://youtu.be/7TfEdOQXw7Y}{https://youtu.be/7TfEdOQXw7Y}.
    } 
    \label{fig:2Dsimulatio}
\end{figure*}

The implementation of the proposed controller is summarized in Algorithm \ref{alg:DAF}, where some parameters tuning is taken into account for practical purposes, since the environment is supposed to be unknown. The dissipative term of the controller \eqref{eq:controller} requires to compute the normal $\eta(p)$ which must be unique and continuous. For these conditions to be satisfied, Assumptions \ref{assumption:positive reach} and \ref{assumption:smoothBoundaries} must hold for the given set $\mathcal{X}$. In a practical framework, the workspace and its parameters $\rho$ and $h$ are not supposed to be known and the initial choice of the parameters $\epsilon$, $\epsilon_1$ and $\epsilon_2$ does not necessarily satisfy  conditions \eqref{condition:1} and \eqref{condition:3}. We address this issue with parameters tuning as presented in Algorithm \ref{alg:DAF}, by checking the uniqueness of the projection $\mathbf{P}_{\partial\mathcal{X}}(p)$. If this projection is not unique, and as a result, the normal $\eta(p)$ is not unique, then we reduce the parameters $\epsilon$, $\epsilon_1$ and $\epsilon_2$ until the distance $d(p)\ge\epsilon_2$. At this point, the function $\gamma(.)$ reduces to zero and only the PD part of the controller remains which has no constraints to be computed. Furthermore, as long as the free space $\mathcal{X}$ fulfills Assumption \ref{assumption:feasibility}, then one can still find parameters $\epsilon$, $\epsilon_1$ and $\epsilon_2$ that will fit conditions \eqref{condition:1} and \eqref{condition:3}.

\begin{algorithm}
\caption{DAF Implementation}\label{alg:DAF}
\begin{algorithmic}[1]
\Initialize{$p(0)\in\mathcal{X}_\epsilon$, $p_d\in\mathbf{int}(\mathcal{X}_\epsilon)$, $v(0)$ \\ $\epsilon> 0$, $\epsilon_2 > \epsilon_1 > 0$}
\While{$\|p-p_d\|>\textrm{MaxError}$}
 \State Measure the minimal distance $d(p)$ using limited-range sensors ({\it e.g.,} Depth cameras, LiDAR, etc), and find the projections $\mathbf{P}_{\partial\mathcal{X}}(p)$.

    \If{$\mathbf{card}(\mathbf{P}_{\partial\mathcal{X}}(p))>1$} 
    \State Reduce $\epsilon> 0$,$\epsilon_1$ and $\epsilon_2$, with $\epsilon_2 > \epsilon_1 > 0$, until  $d(p)>\epsilon_2$. {\color{blue} \Comment{Trying to find parameters $\epsilon$, $\epsilon_1$ and $\epsilon_2$ that satisfy conditions \eqref{condition:1} and \eqref{condition:3}}}
    \If{Unable to reduce $\epsilon_2$ and $\epsilon_1$}
        \State Assumptions \ref{assumption:positive reach}-\ref{assumption:feasibility} are not satisfied for the current workspace.
    \EndIf
    \EndIf
    \If{$d(p)\leq\epsilon_2$} 
        \State Measure normal $\eta(p)$ corresponding to the minimal distance $d(p)$ to the obstacles set. {\color{blue} \Comment{The normal is only required when $d(p)\leq\epsilon_2$ to compute the controller}}
    \EndIf
    \If{$v \approx 0$ and $t \gg 1$}
        \State Reduce $k_2/k_1$. {\color{blue} \Comment{Condition \eqref{eq:conditionCompactForm} for the instability of $\omega$-limit points}}
    \EndIf
    \State Compute the DAF controller \eqref{eq:controller}.
\EndWhile
\end{algorithmic}
\end{algorithm} 

\section{Numerical Simulation}\label{section:NumericalSimulation}
\begin{figure*}
    \centering
    \includegraphics[trim={4cm 0.4cm 2cm 1.4cm},clip,width=\textwidth]{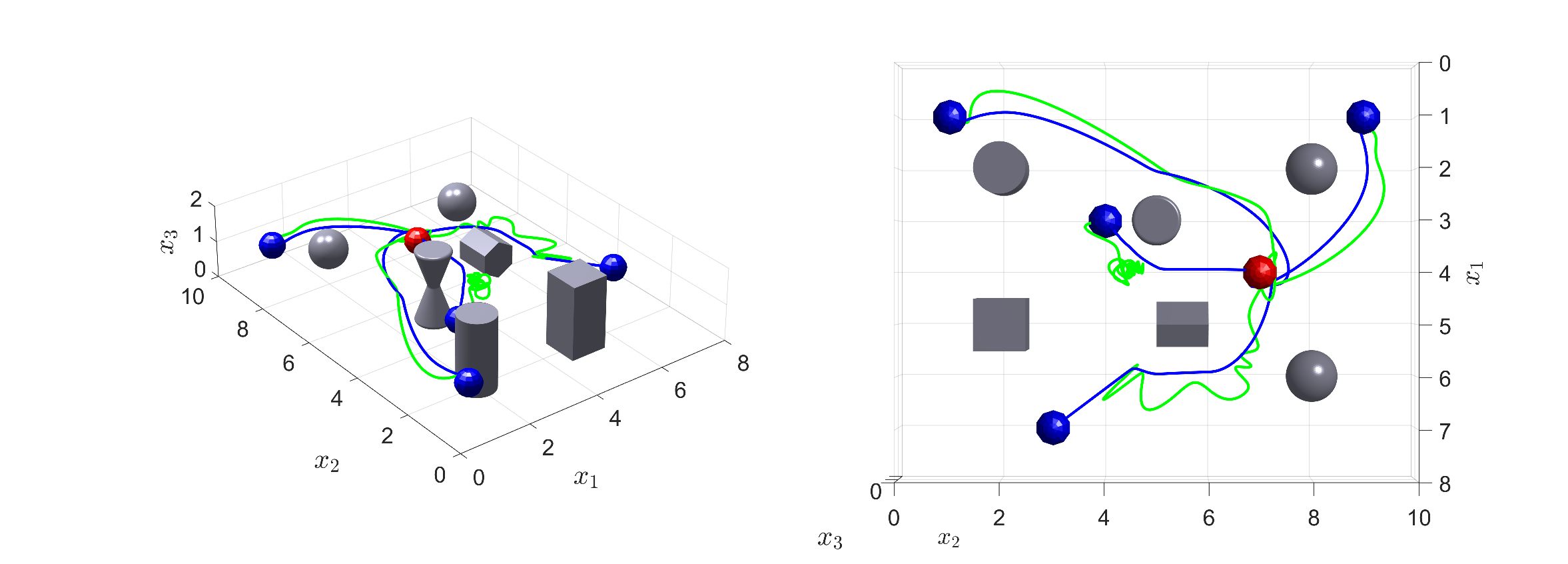}
    \caption{Trajectories of the robot, in a 3D environment filled with convex and non-convex obstacles, using our approach (blue) and the APF approach (green) starting at a set of initial positions (blue) away from the goal (red) while avoiding the obstacles (gray). Animated visualization can be found at \href{https://youtu.be/YqxY2tq8xTw}{https://youtu.be/YqxY2tq8xTw}. } 
    \label{fig:3Dsimulation}
\end{figure*}

In this section, we illustrate the performance of the proposed DAF controller in avoiding obstacles by performing 2D/3D numerical simulations. 
The performance of the approach is shown against the state-of-the-art Artificial potential Field (APF)  approach \cite{khabib1986}. The explicit APF controller is derived as follows. Consider the potential energy $U(p)=U_a(p)+U_r(p)$, where $U_a(p)$ and $U_r(p)$ are the attractive and repulsive, respectively, and defined as follows:
\begin{align}
    U_a(p)&:=\frac{k_a}{2}\|p-p_d\|^2,\\
    U_r(p)&:=\begin{cases}
        \frac{k_r}{2}\Big(\frac{1}{d(p)}-\frac{1}{\epsilon_2}\Big)^2,\hspace{7mm}d(p)\le\epsilon_2,\\
        0,\hspace{31mm}d(p)>\epsilon_2,
    \end{cases}
\end{align}
with $k_a$, $k_r$ positive gains. Letting the dissipation $D(v):=k_vv^\top v/2$ and evaluating the Euler-Lagrange's equation (\ref{eq:LagrangeEquation}), we end up with the following APF control law:
\begin{equation}
  u_{\textrm{APF}}=-k_a(p-p_d)-k_v v+k_r F_r(p),
\end{equation}
where the repulsive acceleration $F_r(p)$ is given by
\begin{multline*}
    F_r:=\begin{cases}
        \frac{k_r\eta(p)}{d^2(p)}\Big(\frac{1}{d(p)}-\frac{1}{\epsilon_2}\Big),\hspace{7mm}d(p)\le\epsilon_2,\\
        0,\hspace{36mm}d(p)>\epsilon_2.
    \end{cases}
\end{multline*}
For simulations in 2D, we consider the free space in Figure \ref{fig:2Dsimulatio} where the boundaries of the obstacles are defined by interpolating a few points with cubic splines. For simulations in 3D, the free space is illustrated in Figure \ref{fig:3Dsimulation} where smooth and non-smooth obstacles with different geometries are considered. For both simulations, the obstacles are assumed to be unknown. The relative distance and the normal to the obstacle are measured using a simulated 2D/3D LiDAR. 

\begin{figure}[t]
    \centering
    \begin{subfigure}{\columnwidth}
    \includegraphics[trim=0 0 0 0,clip,width=\columnwidth]{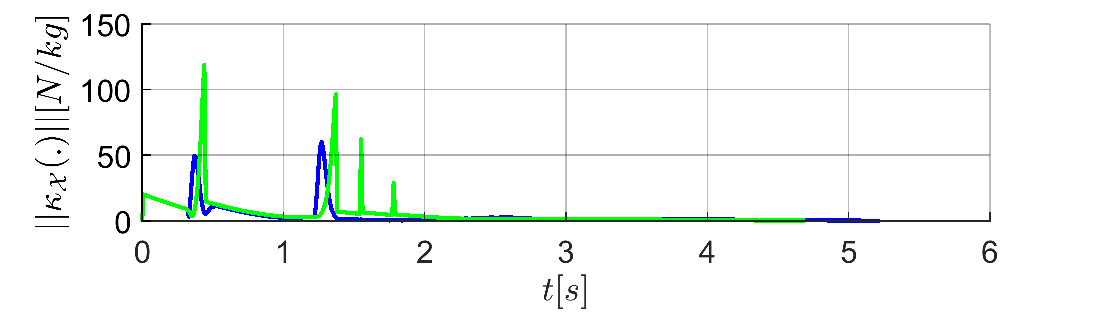}
    \caption{}
    \end{subfigure}
    
    \bigskip
    
    \begin{subfigure}{\columnwidth}
    \includegraphics[trim=0 0 0 0,clip,width=\columnwidth]{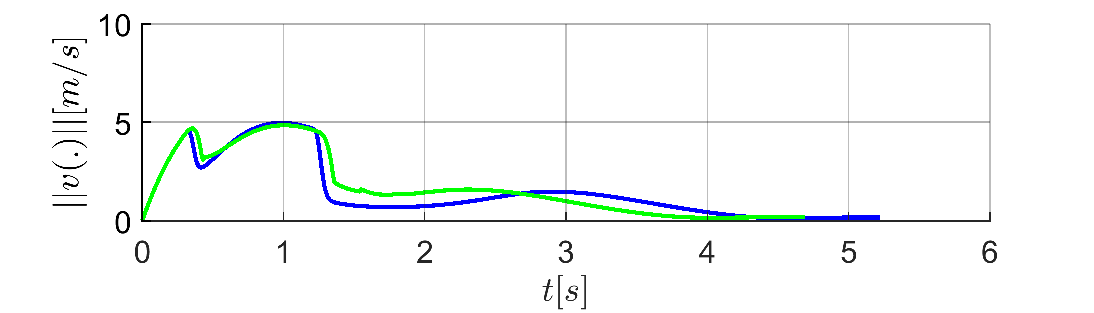}
    \caption{}
    \end{subfigure}
    \begin{subfigure}{\columnwidth}
    \includegraphics[trim=0 0 0 0,clip,width=\columnwidth]{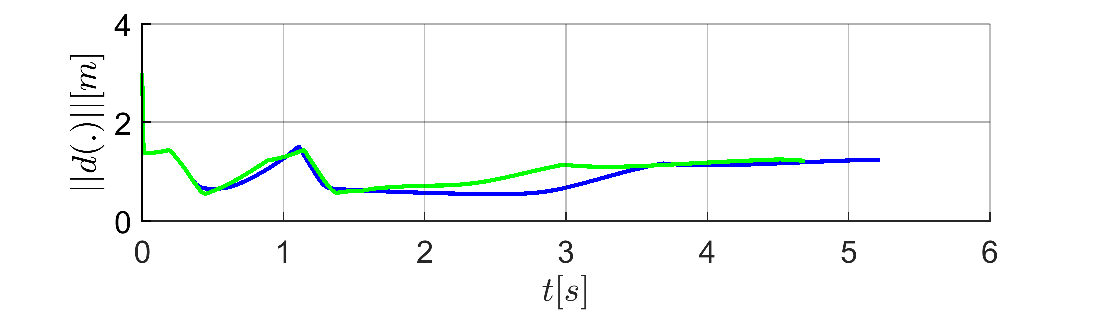}
    \caption{}
    \end{subfigure}
    \begin{subfigure}{\columnwidth}
    \includegraphics[trim=0 0 0 0,clip,width=\columnwidth]{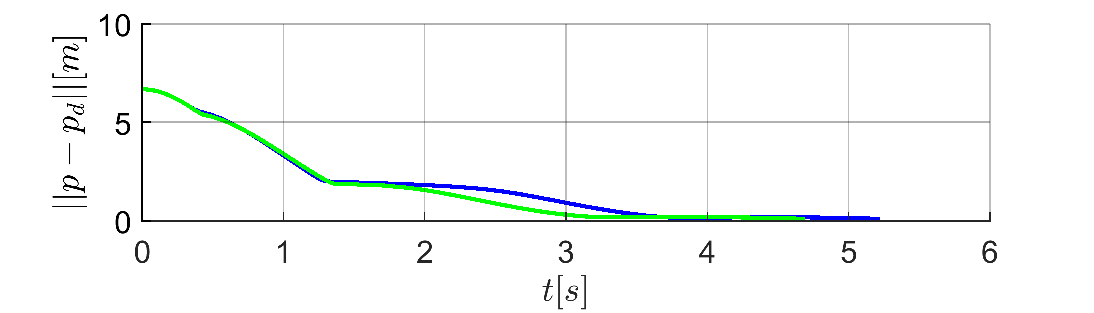}
    \caption{}
    \end{subfigure}
    \caption{(a) Evolution of the input norm  $\|u\|=\lVert \kappa_{\mathcal{X}}(.)\rVert$, (b) velocity norm $\lVert v(.)\rVert$,(c) distance to the obstacle $d(.)$ (d) and the error $\|p-p_d\|$ over time (case 3 presented in Figure \ref{fig:2Dsimulatio}). The (blue) plots correspond to the proposed DAF approach, and the (green) plots correspond to the APF method.}
    \label{fig:output}
\end{figure}

Figure \ref{fig:2Dsimulatio} shows the resulting trajectories of the robot in a 2D space, considering the desired equilibrium at $(3,-1)$, the robot's radius $R=0.34$, the controller parameters $\epsilon=0.06, \epsilon_1=0.3, \epsilon_2=0.6$, along with the gains  $k_1=3, k_2=2, k_3=8$, for the DAF approach, and $k_a=3, k_v=2, k_r=8$, for the APF approach. The maximum range of the 2D sensor is $R_s=3$. Note that we intentionally selected the gains so that the resulting velocity profiles of the DAF and APF approaches are nearly identical for the same initial conditions, as shown in Figure \ref{fig:output}. This choice allowed us to demonstrate that the robot behaves in a more natural and less aggressive manner (less acceleration required) under our approach. In contrast, the APF method often results in abrupt changes in motion direction due to the \textit{bumps} generated by the artificial potential field, requiring more aggressive maneuvers to avoid the obstacles using similar speeds. Finally, Figure \ref{fig:3Dsimulation} illustrates the resulting trajectories of the robot in a 3D space, considering the desired equilibrium at $(4,7,1)$, and the rest of the parameters are exactly the same as the 2D example. From this figure, it is fair to state that the trajectories resulting from the DAF approach seem more natural compared to the APF method. Since the gains are purposely chosen to be the same for the two approaches, the robot experiences a stronger repulsive force under the APF approach making it rebound near obstacles and resulting in oscillatory trajectories. In addition to that, even though the obstacles are separated enough, the robot can still get stuck in local minima under the APF controller (see the attached video) due to the same repulsive property. This reduces the chances of using this approach in crowded environments. On the other hand, the DAF method only requires the satisfaction of the condition \eqref{condition:3} for it to navigate the workspace freely. It is worth mentioning that, for the purpose of proving our theoretical results, the Assumption \ref{assumption:smoothBoundaries} was needed. This assumption restricts the geometry of the obstacles to having only smooth boundaries. This can be qualified as a sufficient condition, as the example presented in Figure \ref{fig:3Dsimulation} shows cases of non-smooth obstacles, yet our approach manages to work quite well.


\section{Conclusion}\label{section:Conclusion}
This paper addresses the challenge of safe navigation in unknown, obstacle-filled $n$-dimensional environments for robots operating under second-order dynamics. We proposed a Dissipative Avoidance Feedback (DAF) controller that guarantees the asymptotic stability of the target equilibrium and instability of undesired $\omega$-limit points. By utilizing only local information (distance and bearing), the proposed controller is well-suited for implementation on vehicles equipped with limited-range sensors. Compared to the state-of-the-art Artificial Potential Fields (APF) approach, DAF demonstrates a significant advantage by employing less aggressive maneuvers (lower acceleration). This opens the door to using the DAF approach in scenarios requiring high-speed navigation (\textit{e.g.} agile drone flights). Future work will explore complex robot dynamics, uncertainties, and adaptations to dynamic environments. 


\appendix

\subsection{Proof of Theorem \ref{theorem:theorem1}}
\label{Appendix:C}
For Item 1), one considers the following Lyapunov function candidate:
        \begin{equation}\label{eq:Lyapunov Function}
            {\cal L}(p,v)=\frac{k_1}{2}\|p-p_d\|^2+\frac{1}{2}\|v\|^2.
        \end{equation}
        One verifies that:
        \begin{equation}\label{eq:DerivativeLyapunov}
            \dot{{\cal L}}(p,v)=-k_2\|v\|^2-k_3\gamma(d(p))\dot{d}^2(p,v),
        \end{equation}
        is negative semi-definite as long as $d(p)>0$ ensuring that $p$ and $v$ are bounded. Recall that $\dot{d}(p,v)=\eta(p)^\top v$. Therefore, in view of the dynamical system (\ref{eq: SecondOrderSystem}) and the controller (\ref{eq:controller}), the dynamics of $\ddot{d}$ are given by: 
\begin{equation}\label{eq:distanceDynamics}
        \ddot{d}(p,v)=-k_3\Phi(d,\dot{d})-k_2\dot{d}-\alpha(p,v),
\end{equation}
with 
\begin{align}
\Phi( d,\dot{ d})&:=\gamma(d(p))\dot{d}(p,v),\\
\alpha(p,v) &:=k_1\eta(p)^\top(p-p_d)-v^\top\mathbf{H}(p)v. \label{eq:alpha}
\end{align}
Since the Hessian $\mathbf{H}(p)$, is continuous and bounded from Assumption \ref{assumption:smoothBoundaries}, it follows that $\alpha$ is bounded, $\forall t \in \mathbb R_{\geq 0}$. From there, direct application of \cite[Lemma 1]{Tang2023} shows that $\mathbf d(t) \in \mathbb R_{>0}$, $\forall t$. This shows that the set $\mathbf{int}(\mathcal{X}_\epsilon)\times\mathbb R^n$ is a forward invariant set. 

To prove Item 2), one first notes that the same Lemma also shows that $\Phi(\mathbf d,\dot{\mathbf d})$ and $\dot{\mathbf d}$ are bounded even if $\mathbf d(t) \in R_{>0}$ converges to zero.  This, in turn, implies that $\ddot{\cal L}$ is bounded, and hence, direct application of Barbalat's lemma ensures that $v \to 0$. To show that $p$ converges either to the desired point $p_d$
 or the undesired $\omega$-limit points $p^* \in \mathcal{E}$ defined by \eqref{eq:undiseredEquilibriumSet}, one distinguishes the case $d(t) \geq \bar{\epsilon} >0$ from the case for which $ d(t) \to 0$. For the first case, using Barbalat's lemma again, one verifies that $(p-p_d) \to 0$, and hence the equilibrium $(p,v)=(p_d,0)$ is asymptotically stable.
For the second case, the proof relies on a modified version of Barbalat's \cite[Lemma 1]{ms93rap}:
\begin{lemma}[\cite{ms93rap}]\label{lem:barbalat} 
Let $f(t)$ denote a solution to $\dot{f}(t)=g(t)+h(t)$ with $g(t)$ uniformly continuous function.
Assume that $\lim_{t \rightarrow \infty} f(t)=c$ and $\lim_{t \rightarrow \infty} h(t)=0$, with $c$ constant.
Then $\lim_{t \rightarrow \infty} \dot{f}(t)=0$.
\end{lemma}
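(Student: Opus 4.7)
The goal is to conclude $\dot{f}(t) \to 0$ from the decomposition $\dot{f} = g + h$, uniform continuity of $g$, convergence $f \to c$, and $h \to 0$. Since $h(t) \to 0$ by hypothesis, proving $\dot{f}(t) \to 0$ is equivalent to proving $g(t) \to 0$. So the plan reduces to showing that a uniformly continuous function $g$ whose ``integral plus a vanishing nuisance'' is asymptotically constant must itself vanish, which is a mild generalization of the classical Barbalat argument.

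I would argue by contradiction: assume $g(t) \not\to 0$. Then there exist $\varepsilon_0 > 0$ and a sequence $t_n \to \infty$ with $|g(t_n)| \geq \varepsilon_0$. Passing to a subsequence, we may assume the sign of $g(t_n)$ is fixed, say $g(t_n) \geq \varepsilon_0$. By the uniform continuity of $g$, there exists $\delta > 0$ (independent of $n$) such that $|g(t) - g(t_n)| \leq \varepsilon_0/2$ for all $t \in [t_n, t_n + \delta]$, so $g(t) \geq \varepsilon_0/2$ throughout that interval, yielding
\begin{equation}
\int_{t_n}^{t_n + \delta} g(s)\, ds \;\geq\; \frac{\varepsilon_0 \delta}{2}.
\end{equation}
On the other hand, integrating $\dot{f} = g + h$ on $[t_n, t_n + \delta]$ gives
\begin{equation}
\int_{t_n}^{t_n + \delta} g(s)\, ds \;=\; \bigl(f(t_n + \delta) - f(t_n)\bigr) - \int_{t_n}^{t_n+\delta} h(s)\, ds.
\end{equation}
The first bracket tends to $c - c = 0$ as $n \to \infty$ because $f \to c$, and the second integral tends to $0$ because $h(t) \to 0$ (so $|h|$ is eventually bounded by any prescribed quantity, and $\delta$ is fixed). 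Hence the left side tends to $0$, contradicting the lower bound $\varepsilon_0\delta/2$. Therefore $g(t) \to 0$, and combining with $h(t)\to 0$ yields $\dot{f}(t) = g(t) + h(t) \to 0$.

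The main technical subtlety, and the only place the hypotheses are used in a nontrivial way, is the choice of a single $\delta$ that works simultaneously for every $t_n$; this is exactly what uniform continuity of $g$ (as opposed to mere continuity) delivers, and it is what prevents the standard Barbalat lemma from requiring $\dot{f}$ itself to be uniformly continuous. Everything else is routine: extracting a sign-constant subsequence, bounding the $h$-integral, and invoking the telescoping increment of $f$.
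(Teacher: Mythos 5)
Your proof is correct. Note, however, that the paper does not prove this lemma at all: it is imported by citation from \cite{ms93rap} and used as a black box in the proof of Theorem \ref{theorem:theorem1}, so there is no in-paper argument to compare against. Your argument is the natural one: reduce to showing $g(t)\to 0$ (since $h\to 0$ is given), then run the classical Barbalat contradiction --- extract a sign-constant sequence $t_n$ with $|g(t_n)|\ge\varepsilon_0$, use uniform continuity of $g$ to get a uniform window $[t_n,t_n+\delta]$ on which $g$ stays bounded away from zero, and compare the resulting lower bound on $\int_{t_n}^{t_n+\delta}g$ with the telescoping increment $f(t_n+\delta)-f(t_n)\to 0$ minus the vanishing $h$-integral. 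You correctly identify that uniform continuity is needed precisely to make $\delta$ independent of $n$, and that this is what lets the lemma avoid demanding uniform continuity of $\dot f$ itself. Two minor points worth being aware of: (i) the lemma is applied in the paper to vector-valued quantities (e.g.\ $f_1=\pi_\eta v$), so your scalar sign-selection step should be applied componentwise or after projecting onto a fixed direction; (ii) the step $f(t_n+\delta)-f(t_n)=\int_{t_n}^{t_n+\delta}\dot f$ implicitly assumes $f$ is absolutely continuous (or $C^1$), which is the standard standing hypothesis in Barbalat-type results and holds in the paper's application. Neither affects the validity of the argument.
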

Consider first the dynamics of $v$ in the directions orthogonal to $\eta$.  
Define $f_1(t)=\pi_\eta v$, where $\pi_x:=I_n-xx^\top$ for $\|x\|=1$ is the orthogonal projection operator, then one verifies:
\begin{align*}
\dot{f}_1=\dot\pi_\eta v +\pi_\eta \dot{v}&=-k_1 \pi_\eta (p-p_d) -k_2 \pi_\eta v + \dot\pi_\eta v,\\
&=g_1(t) + h_1(t)
\end{align*}
with $g_1(t)=-k_1 \pi_\eta (p-p_d) $ and $h_1(t)=-k_2 \pi_\eta v +\dot \pi_{\eta} v$. Using the fact $\dot{\eta}=\mathbf{H}(p)v$ which is bounded $\forall d(t) \in R_{\geq 0}$, one ensures that $g_1(t)$ is uniformly continuous and $h_1(t)$ is converging to zero. From there, one concludes that $\dot{f}_1$ also converges to zero. Hence $g_1(t)=-k_1 \pi_\eta (p-p_d) \to 0$, which implies that: 
\begin{equation}\label{eq:d=0} 
(p-p_d) \to \mu \eta, \quad \mbox{ with } \mu=\|p-p_d\|> 0.
\end{equation}
Consider now the dynamics of $v$ in the $\eta$ direction. The convergence of $d \to 0$ along with \eqref{eq:d=0}, implies that $p$ converges to $\mathcal{E}$ \eqref{eq:undiseredEquilibriumSet}, with $\mu=\|p^*-p_d\|$ and $\eta=(p^*-p_d)/\mu$. It also implies that $\alpha$  \eqref{eq:alpha} converges to a positive constant. From there, application of \cite[Lemma 1]{Tang2023} shows that $\ddot{d} \to 0$ and hence $\Phi(d,\dot{d})$  converges to $\alpha_\Phi=-\mu k_1/k_3$.

For Item 3), we consider the following positive function:
        \begin{equation}\label{eq:W}
           {\cal W}(p,v)=\frac{k_1}{2}||p-p^*||^2+\frac{1}{2}||v||^2.
        \end{equation}
        Recalling \eqref{eq:controller}, one verifies that:
        \begin{align}\label{eq:Wdot}
            \dot{\cal W}(p,v)=&-k_1 v^\top(p^*-p_d)\nonumber\\
            &-k_2||v||^2-k_3(v^\top\eta(p))\Phi(d(t),\dot{d}(t)).
        \end{align}

        Consider now an arbitrary point in the neighbor of $(p^*,0)$ given by
        \begin{equation}
            \left\{
            \begin{array}{l}
             v=  \sigma\delta v,\\
             p=p^*+\sigma\delta p,
            \end{array}\right.
        \end{equation}
        with $\delta p$ and $\delta v$ unit vectors tangent to $\partial\mathcal{X}_\epsilon$ and $\sigma$ a sufficiently small positive scalar. The gradient $\eta(p)$ can be approximated as follows:        
        \begin{equation}
            \eta(p)=\eta(p^*)+\sigma\mathbf{H}(p^*)\delta p+O(\sigma^2),
        \end{equation}
while $\Phi(d(p),\dot{d}(p,v))\approx \alpha_\Phi$.
With these ingredients and neglecting high order terms by choosing $\sigma$ small enough, one can verify that in the neighborhood of $(p^*,0)$:
        \begin{align}
            \dot{\cal W}(p,v)\approx&-k_2\sigma^2||\delta v||^2
            -\sigma^2 k_3\alpha_\Phi\delta v^\top\mathbf{H}(p^*)\delta p      
            \end{align}
By choosing $\delta v=\delta p=\nu_\lambda$, with $\nu_\lambda$ the eigenvector of $\mathbf{H}(p^*)$ associated to $\lambda_{\mathbf H}$,  one gets:
        \begin{align}
            \dot{\cal W}(p,v)&\approx-\sigma^2 \delta v^\top[k_2 I+\alpha_\Phi k_3\mathbf{H}(p^*)]\delta v, \notag \\
& \approx -\sigma^2 (k_2+\alpha_\Phi k_3\lambda_{\mathbf{H}})
        \end{align}
Now, to show that point $(p^*,0)$ is unstable, it suffices to show that $\dot{\cal W}(p,v)>0$ in the neighbor of that point and therefore:
        \begin{equation}\label{eq:condition 1}
            k_2+\alpha_\Phi k_3\lambda_{\mathbf{H}}<0 \Rightarrow \lambda_{\mathbf{H}}>\frac{k_2}{k_1\mu}=\frac{k_2}{k_1\|p^*-p_d\|}
        \end{equation}

The above condition also implies that there exists a point $p_0\in \partial\mathcal{X}_\epsilon $ in the neighbor of $p^*$ such that: ${\cal L}(p_0,0) \leq {\cal L}(p_0,0)$ and hence  $(\|p_0-p_d\|<\|p^*-p_d\|)$, where ${\cal L}(p_0,0)$ is the Lyapunov function defined by (\ref{eq:Lyapunov Function}). Geometrically speaking, there exists a plane defined by the three points $p_d$, $p_0$ and $p^*$ such that the intersection between $\partial\mathcal{X}$ around the obstacle and the plane is strictly inside the ball centered at $p_d$ with a radius $\|p^*-p_d\|$, except for $p^*$. This condition is also satisfied as long as there exists an eigenvalue $\lambda_{\mathbf{H}}$ of $\mathbf{H}(p^*)$ such that  
        \begin{equation}\label{eq:condition 2}
            \lambda_{\mathbf{H}}>\frac{1}{\|p^*-p_d\|}.
        \end{equation}
By combining (\ref{eq:condition 1}) and (\ref{eq:condition 2}) one concludes that:
        \begin{equation}
            \lambda_{\mathbf{H}}>\min\Big(1,\frac{k_2}{k_1}\Big)\frac{1}{\|p^*-p_d\|}.
        \end{equation}       

\bibliographystyle{ieeetr}
\bibliography{references}

\end{document}